\def\widebreve{\mathpalette\wide@breve}
\def\wide@breve#1#2{\sbox\z@{$#1#2$}%
     \mathop{\vbox{\m@th\ialign{##\crcr
\kern0.08em\brevefill#1{0.8\wd\z@}\crcr\noalign{\nointerlineskip}%
                    $\hss#1#2\hss$\crcr}}}\limits}
\def\brevefill#1#2{$\m@th\sbox\tw@{$#1($}%
  \hss\resizebox{#2}{\wd\tw@}{\rotatebox[origin=c]{90}{\upshape(}}\hss$}
\icmltitlerunning{\name}
\newcommand{\what}{\widehat}
\newcommand{\wtilde}{\widetilde}
\newcommand{\name}{CodedFedL }
\newcommand{\Expc}{\mathbb{E}}
\newcommand{\Prob}{\mathbb{P}}
\newcommand{\bG}{\mathbf{G}}
\newcommand{\bW}{\mathbf{W}}
\newcommand{\bw}{\mathbf{w}}
\newcommand{\bg}{\mathbf{g}}
\newcommand{\bx}{\mathbf{x}}
\newcommand{\bl}{\boldsymbol{\ell}}
\newcommand{\by}{\mathbf{y}}
\newcommand{\bY}{\mathbf{Y}}
\newcommand{\bX}{\mathbf{X}}
\newcommand{\bom}{\boldsymbol\omega}
\newcommand{\bbeta}{\boldsymbol\beta}
\newcommand{\argmin}{\operatornamewithlimits{arg\,min}}
\DeclareMathOperator*{\argmax}{arg\,max}
\newcommand{\norm}[1]{\left\lVert#1\right\rVert}
\newtheorem*{theorem*}{Theorem}
\newtheorem{remark}{Remark}
\begin{document}
\twocolumn[
\icmltitle{Coded Computing for Federated Learning at the Edge}
% It is OKAY to include author information, even for blind
% submissions: the style file will automatically remove it for you
% unless you've provided the [accepted] option to the icml2020
% package.
% List of affiliations: The first argument should be a (short)
% identifier you will use later to specify author affiliations
% Academic affiliations should list Department, University, City, Region, Country
% Industry affiliations should list Company, City, Region, Country
% You can specify symbols, otherwise they are numbered in order.
% Ideally, you should not use this facility. Affiliations will be numbered
% in order of appearance and this is the preferred way.
%\icmlsetsymbol{equal}{*}
\begin{icmlauthorlist}
\icmlauthor{Saurav Prakash}{usc}
\icmlauthor{Sagar Dhakal}{intel}
\icmlauthor{Mustafa Akdeniz}{intel}
\icmlauthor{A. Salman Avestimehr}{usc}
\icmlauthor{Nageen Himayat}{intel}
\end{icmlauthorlist}
\icmlaffiliation{usc}{Department of Electrical and Computer Engineering, University of Southern California, Los Angeles, CA}
\icmlaffiliation{intel}{Intel Labs, Santa Clara, CA}
\icmlcorrespondingauthor{Saurav Prakash}{sauravpr@usc.edu}
%\icmlcorrespondingauthor{}{}
% You may provide any keywords that you
% find helpful for describing your paper; these are used to populate
% the "keywords" metadata in the PDF but will not be shown in the document
\icmlkeywords{Machine Learning, ICML}
\vskip 0.3in
]
% this must go after the closing bracket ] following \twocolumn[ ...
% This command actually creates the footnote in the first column
% listing the affiliations and the copyright notice.
% The command takes one argument, which is text to display at the start of the footnote.
% The \icmlEqualContribution command is standard text for equal contribution.
% Remove it (just {}) if you do not need this facility.
\printAffiliationsAndNotice{This work was part of Saurav Prakash's internship projects at Intel.\\}  % leave blank if no need to mention equal contribution
%\printAffiliationsAndNotice{\icmlEqualContribution} % otherwise use the standard text.

\begin{abstract}
\textit{Federated Learning} (FL) is an exciting new paradigm that enables training a global model from data generated locally at the client nodes, without moving client data to a centralized server. Performance of FL in a multi-access edge computing (MEC) network suffers from slow convergence due to heterogeneity and stochastic fluctuations in compute power and communication link qualities across clients. A recent work, Coded Federated Learning (CFL), proposes to mitigate stragglers and speed up training for \textit{linear regression tasks} by assigning \textit{redundant} computations at the MEC server. Coding redundancy in CFL is computed by exploiting statistical properties of compute and communication delays. We develop CodedFedL that addresses the difficult task of extending CFL to \textit{distributed non-linear regression} and \textit{classification} problems with multi-output labels. The key innovation of our work is to exploit distributed kernel embedding using random Fourier features that transforms the training task into distributed linear regression. We provide an analytical solution for load allocation, and demonstrate significant performance gains for CodedFedL through experiments over benchmark datasets using practical network parameters.
\end{abstract}

\section{Introduction}
We live in an era where massive amounts of data are generated each day by the IoT comprising billions of devices \cite{saravanan2019role}. To utilize this sea of distributed data for learning powerful statistical models without compromising data privacy, an exciting paradigm called federated learning (FL) \cite{konevcny2016federated,mcmahan2017communication,kairouz2019advances} has been rising recently. The  FL framework  comprises of two major steps. First, every client carries out a local update on its dataset. Second, a central server collects and aggregates the gradient updates to obtain a new model and transmits the updated model to the clients. This iterative procedure is carried out until convergence. 

We focus on the implementation of FL in multi-access edge computing (MEC) platforms \cite{guo2018efficient, shahzadi2017multi, ndikumana2019joint, ai2018edge} that enable low-latency, efficient and cloud like computing capabilities close to the client traffic. Furthermore, with the emergence of ultra-dense networks \cite{ge20165g, an2017achieving, andreev2019future}, it is increasingly likely that message transmissions during distributed learning take place over wireless. Thus, carrying out FL over MEC suffers from some fundamental bottlenecks. First, due to the heterogeneity of compute and communication resources across clients, the overall gradient aggregation at the server can be significantly delayed by the straggling computations and communication links. Second, FL suffers from wireless link failures during transmission. Re-transmission of messages can be done for failed communications, but it may drastically prolong training time. Third, data is typically non-IID across client nodes in an FL setting, i.e. data stored locally on a device does not represent the population distribution~\cite{zhao2018federated}. Thus, missing out updates from some clients can lead to poor convergence. 

In a recent work~\cite{dhakal2019}, a novel technique Coded Federated Learning (CFL) based on coding theoretic ideas, was proposed to alleviate the aforementioned bottlenecks in FL for \textit{linear regression} tasks. In CFL, at the beginning of the training procedure, each client generates masked parity data by taking linear combinations of features and labels in the local dataset, and shares it with the central server. The masking coefficients are never shared with the server, thus keeping the raw data private. During training, server performs redundant gradient computations on the aggregate parity data to compensate for the erased or delayed parameter updates from the straggling clients. The combination of coded gradient computed at the server and the gradients from the non-straggling clients \textit{stochastically} approximates the full gradient over the entire dataset available at the clients.  
    
\textbf{Our Contributions}: CFL is limited to linear regression tasks with scalar output labels. Additionally, CFL lacks theoretical analysis and evaluations over real-world datasets. \textit{In this paper, we build upon CFL and develop CodedFedL to address the difficult problem of injecting structured redundancy for straggler mitigation in FL for general non-linear regression and classification tasks with vector labels.} The key idea of our approach is to perform the kernel Fourier feature mapping~\cite{rahimi2008random} of the client data, transforming the distributed learning task into linear regression. This allows us to leverage the framework of CFL for generating parity data privately at the clients for straggler mitigation during training. For creating the parity data, we propose to take linear combinations over random features and vector labels (e.g. one-hot encoded labels for classification). Furthermore, for a given coded redundancy, we provide an analytical approach for finding the optimal load allocation policy for straggler mitigation, as opposed to the numerical approach proposed in CFL. The analysis reveals that the key subproblem of the underlying optimization can be formulated as a piece-wise concave problem with bounded domain, which can be solved using standard convex optimization tools. Lastly, we evaluate the performance of CodedFedL over two real-world datasets, MNIST and Fashion MNIST, for which CodedFedL achieves significant gains of $2.7\times$ and $2.37\times$ respectively over the uncoded approach for a small coding redundancy of $10\%$.

\textbf{Related Works}: \textit{Coded computing} is a new paradigm that has been developed for injecting computation redundancy in unorthodox encoded forms to efficiently deal with communication bottleneck and system disturbances like stragglers, outages, node failures, and adversarial computations in distributed systems~\cite{li2017fundamental,lee2017speeding, tandon2017gradient,karakus2017straggler,reisizadeh2019coded}. Particularly, ~\cite{lee2017speeding} proposed to use erasure coding for speeding up distributed matrix multiplication and linear regression tasks. \cite{tandon2017gradient} proposed a coding method over gradients for synchronous gradient descent. ~\cite{karakus2019} proposed to encode over the data for avoiding the impact of stragglers for linear regression tasks. Many other works on coded computing for straggler mitigation in distributed learning have been proposed recently \cite{ye2018communication,dhakal2019coded,yu2017polynomial,raviv2017gradient,charles2017approximate}. In all these works, the data placement and coding strategy is orchestrated by a central server. As a result, these works are not applicable in FL setting, where data is privately owned by clients and cannot be shared with the central server. The CFL paper \cite{dhakal2019} was the first to propose a distributed method for coding in a federated setting, but was limited to linear regression. 

Our work, CodedFedL, proposes to develop coded parity data for non-linear regression and classification tasks with vector labels. For this, we leverage the popular kernel embedding based on random Fourier features (RFF)~\cite{rahimi2008random,pham2013fast,kar2012random}, that has been a popular approach for dealing with kernel-based inference on large datasets. In the classical kernel approach~\cite{shawe2004kernel}, a kernel similarity matrix needs to be constructed, whose storage and compute costs are quadratic in the size of the dataset. RFF was proposed in \cite{rahimi2008random} to address this problem by explicitly constructing finite-dimensional random features from the data such that inner products between the random features approximate the kernel functions. Training and inference with random features have been shown to work considerably well in practice~\cite{rahimi2008random,rahimi2008uniform,shahrampour2018data, kar2012random, rick2016random}. 
\section{Problem Setup}\label{sec:background}

In this section, we provide a preliminary background on linear regression and FL, followed by a description of our computation and communication models.
\subsection{Preliminaries}
\label{sec:prelim}
Consider a dataset $D{=}\{(\bx_1,\by_1),\ldots,(\bx_m,\by_m)\}$, where for $i{\in}\{1,\ldots,m\}$, data feature $\bx_i{\in}\mathbb{R}^{1\times d}$ and label $\by_i {\in}\mathbb{R}^{1\times c}$. Many machine learning tasks consider the following optimization problem:
\begin{align}
\label{eq:mlopt}
\bbeta^{*}&=\argmin_{\bbeta\in \mathbb{R}^{d\times c}} \frac{1}{2}\norm{\bX\bbeta-\bY}_F^2+\frac{\lambda}{2}\norm{\bbeta}_F^2,\nonumber\\
&=\argmin_{\bbeta\in \mathbb{R}^{d\times c}} \frac{1}{2m}\sum_{i=1}^m\norm{\bx_i\bbeta-\by_i}_2^2+\frac{\lambda}{2}\norm{\bbeta}_F^2,
\end{align}
where $\bX{\in} \mathbb{R}^{m\times d}$ and $\bY{\in} \mathbb{R}^{m\times c}$ denote the feature matrix and label matrix respectively, while $\bbeta$, $\lambda$ and $\norm{\cdot}_F$ denote the model parameter, regularization parameter and Frobenius norm respectively. A common strategy to solve (\ref{eq:mlopt}) is gradient descent. Specifically, model is updated sequentially until convergence as follows: $\bbeta^{(r+1)} = \bbeta^{(r)} {-} {\mu^{(r+1)}}\left({\bg}+{\lambda} \bbeta^{(r)}\right)$, where ${\mu^{(r+1)}}$ denotes the learning rate. Here, $\bg$ denotes the gradient of the loss function over the dataset $D$ as follows: $\bg=\frac{1}{m}\bX^T(\bX\bbeta^{(r)}-\bY).$
%In federated learning, dataset $D$ is distributedly available at the clients, as we describe next. 

In FL, the goal is to solve (\ref{eq:mlopt}) for the dataset $D=\cup_{j=i}^n D_j$, where $D_j$ is the dataset that is available locally at client $j$. Let $\ell_j{>}0$ denote the size of $D_j$, and $\bX^{(j)} {=} [\bx^{(j)T}_1, \dots, \bx^{(j)T}_{\ell_j}]^T$ and $\bY^{(j)} {=} [\by^{(j)T}_1, \dots, \by^{(j)T}_{\ell_j}]^T$ denote the feature set and label set respectively for the $j$-th client. Therefore, the combined feature and label sets across all clients can be represented as $\bX = [\bX^{(1)T}, \dots, \bX^{(n)T}]^T$ and $\bY = [\bY^{(1)T}, \dots, \bY^{(n)T}]^T$ respectively. 

During iteration ${(r+1)}$ of training, the server shares the current model $\bbeta^{(r)}$ with the clients. Client $j$ then computes the local gradient $\bg^{(j)}{=} \frac{1}{\ell_j}\bX^{(j)T}(\bX^{(j)}\bbeta^{(r)} - \bY^{(j)})$. The server collects the client gradient and combines them to recover the full gradient $\bg = \frac{1}{m}\sum_{j=1}^n \ell_j\bg^{(j)} = \frac{1}{m}\bX^T(\bX\bbeta^{(r)}-\bY)$. The server then carries out the model update to obtain $\bbeta^{(r+1)}$, which is then shared with the clients in the following training iteration. The iterative procedure is carried out until sufficient convergence is achieved.

To capture the stochastic nature of implementing FL in MEC, we consider probabilistic models for computation and communication resources as illustrated next.

\subsection{Models for Computation and Communication}
\label{subsec:delayModel}
To statistically represent the compute heterogeneity, we assume a shifted exponential model for local gradient computation. Specifically, the computation time for $j$-th client is given by the  shifted exponential random variable $T_{cmp}^{(j)}$ as $T_{cmp}^{(j)} {=} T_{cmp}^{(j,1)} {+} T_{cmp}^{(j,2)}.$ Here, $T_{cmp}^{(j,1)}{=} \frac{\ell_j}{\mu_j}$ represents the non-stochastic part of the time in seconds to process partial gradient over $\ell_j$ data points, where processing rate is $\mu_j$ data points per second. $T_{cmp}^{(j,2)}$ models the stochastic component of compute time coming from random memory access during read/write cycles associated to Multiply-Accumulate (MAC) operations, where $p_{T_{cmp}^{(j,2)}}(t) {=} \gamma_j e^{-\gamma_j t},\,t{\geq} 0$. Here, $\gamma_j {=}  \frac{\alpha_j\mu_j}{\ell_j}$, with $\alpha_j{>}0$ controlling the average time spent in computing vs memory access.

The overall execution time for $j$-th client during $(r+1)$-th epoch also includes  $T_{com-d}^{(j)}$, time to download $\bbeta^{(r)}$ from the server, and $T_{com-u}^{(j)}$, time to upload the partial gradient $\bg^{(j)}$ to the server. The communications between server and clients take place over wireless links, that fluctuate in quality. It is a typical practice to model the wireless link between server and $j$-th client by a tuple $(r_j,p_j)$, where $r_j$ and $p_j$ denote the the achievable data rate (in bits per second per Hz) and link erasure probability $p_j$~\cite{3gpp}. Downlink and uplink communication delays are IID random variables given as follows\footnote{For the purpose of this article, we assume the downlink and uplink delays to be reciprocal. Generalization of our framework to asymmetric delay model is easy to address.}: $T_{com-d}^{(j)}{=}N_j \tau_j,\,T_{com-u}^{(j)}{=}N_j \tau_j$. Here, $\tau_j{=}\frac{b}{r_j W}$ is the deterministic time to upload (or download) a packet of size $b$ bits containing partial gradient $\bg^{(j)}$ (or model $\bbeta^{(r)}$) and $W$ is the bandwidth in Hz assigned to the $j$-th worker device. Here, $N_j{\sim}G(p=1-p_j)$ is a geometric random variable denoting the number of transmissions required for the first successful communication as follows:
\begin{align}
\label{eq:commModel}
\Prob\{N_j=x\} = p^{x-1}_j(1-p_j), \,\, x = 1,2,3,...
\end{align}
Therefore, the total time $T^{(j)}$ taken by the $j$-th device to receive the updated model, compute and successfully communicate the partial gradient to the server is as follows:
\begin{align}
\label{eq:totDelay}
T^{(j)} = T_{comp}^{(j)} + T_{comm-d}^{(j)} + T_{comm-u}^{(j)},
\end{align}
with average delay $
E(T^{(j)}) = \frac{\ell_j}{\mu_j} (1 +\frac{1}{\alpha_j}) + \frac{2 \tau_j}{1-p_j}$. 
\section{Proposed  \name   Scheme}\label{sec:codedfedl}
\label{sec:cgd}
In this section, we present the different modules of our proposed \name scheme for resilient FL in MEC networks: random Fourier feature mapping for non-linear regression, structured optimal redundancy for mitigating stragglers, and modified training at the server.  

\subsection{Kernel Embedding}
\label{sec:kernelembed}
Linear regression procedure outlined in Section \ref{sec:prelim} is computationally favourable for low-powered personalized client devices,  as the gradient computations involve matrix multiplications which have low computational complexity. However, in many ML problems, a linear model does not perform well. To combine the advantages of non-linear models and low complexity gradient computations in linear regression in FL, we propose to leverage kernel embedding based on random Fourier feature mapping (RFFM)~\cite{rahimi2008random}. In RFFM, each feature $\bx_i{\in} \mathbb{R}^{1\times d}$ is mapped to $\widehat{\bx}_i$ using a function $\phi:\mathbb{R}^{1\times d}\rightarrow \mathbb{R}^{1\times q}$. RFFM approximates a positive definitive kernel function $K:\mathbb{R}^{1\times d}\times \mathbb{R}^{1\times d}\rightarrow \mathbb{R}^{1\times q}$ as represented below:
\begin{align}
    K(\bx_{i},\bx_{j}) \approx \what{\bx}_i\what{\bx}^T_j=\phi(\bx_{i})\phi(\bx_{j})^T.
\end{align}
Before training starts, $j$-th client carries out RFFM to transform its raw feature set $\bX^{(j)}$ to $\what{\bX}^{(j)}{=}\phi(\bX^{(j)})$, and training proceeds with the transformed dataset $\what{D}{=}(\what{\bX},\bY)$, where $\what{\bX}{\in} \mathbb{R}^{m\times q}$ is the matrix denoting all the transformed features across all clients. The goal then is to recover the full gradient $\what{\bg}{=}\frac{1}{m}\what{\bX}^T(\what{\bX}\bbeta^{(r)}-\what{\bY})$ over $\what{D}$ for $\bbeta^{(r)}{\in}\mathbb{R}^{q\times c}$. In this paper, we consider a commonly used kernel known as RBF kernel~\cite{vert2004primer}
$K(\bx_i,\bx_j) = \exp\left(-\frac{||\bx_i-\bx_j||^2}{2\sigma^2}\right),$
where $\sigma$ is the kernel width parameter. RFFM for the RBF kernel is obtained as follows~\cite{rahimi2008uniform}:
\begin{align}
\label{eq:rffm}
    \what{\bx}_i=\sqrt{\frac{2}{q}}\left[\cos(\bx_i\bom_1+\delta_1),\ldots,  \cos(\bx_i\bom_q+\delta_q))\right]
\end{align}
where the frequency vectors $\bom_s\in \mathbb{R}^{d\times 1}$ are drawn independently from $\mathcal{N}(\mathbf{0}, \frac{1}{\sigma^2}\mathbf{I}_d)$, while the shift elements $\delta_s$ are drawn independently from $\text{Uniform}(0,2\pi]$ distribution.
\begin{remark}
For distributed transformation of features at the clients, the server sends the same pseudo-random seed to every client which then obtains the samples required for RFFM in (\ref{eq:rffm}). This mitigates the need for the server to communicate the samples $\bom_1,\ldots, \bom_q,\delta_1,\ldots,\delta_q$ thus minimizing client bandwidth overhead.
\end{remark}
Along with the computational benefits of linear regression over the transformed dataset $\what{D}{=}(\what{\bX},\bY)$, applying RFFM allows us to apply distributed encoding strategy for linear regression developed in \cite{dhakal2019}. The remaining part of Section \ref{sec:codedfedl} has been adapted from the CFL scheme proposed in \cite{dhakal2019}.  

\subsection{Distributed Encoding}\label{encode}
Client $j$ carries out random linear encoding over its transformed training dataset  $\what{D}_j{=}(\what{\bX}^{(j)},{\bY}^{(j)})$ containing transformed feature set $\what{X}^{(j)}$ obtained from RFFM. Specifically, random generator matrix $\bG_j{\in} \mathbb{R}^{u \times \ell_j}$ is used for encoding, where $u$ denotes the \textit{coding redundancy} which is the amount of parity data to be generated at each device. Typically, $u{\ll}m$. Further discussion on $u$ is deferred to Section \ref{sec:red}, where the load allocation policy is described. 

Entries of $\bG_j$ are drawn independently from a normal distribution with mean $0$ and variance $\frac{1}{u}$. $\bG_j$ is applied on the \textit{weighted} local dataset to obtain $\widebreve{D}_j{=}(\widebreve{\bX}^{(j)},\widebreve{\bY}^{(j)})$ as follows: $\widebreve{\bX}^{(j)}{=}\bG_j \bW_j \what{\bX}^{(j)},\,\,  \widebreve{\bY}^{(j)}{=}\bG_j \bW_j \bY^{(j)}$.
For $\bw_j=[w_{j,1},\ldots,w_{j,\ell_j}]$, the weight matrix $\bW_j=\text{diag}(\bw_j)$ is an $\ell_j{\times}\ell_j$ diagonal matrix that weighs training data point $(\what{\bf x}^{(j)}_k,\by^{(j)}_k)$ with $w_{j,k}$ based on the stochastic conditions of the compute and communication resources, $k{\in}[\ell_j]$. We defer the details of deriving $\bW_j$ to Section \ref{sec:weight}. 

The central server receives local parity data from all client devices and combines them to obtain the \textit{composite} parity dataset $\widebreve{D}{=}(\widebreve{\bX},\widebreve{\bY})$, where $\widebreve{\bX} {\in} \mathbb{R}^{u \times q}$ and $\widebreve{\bY}{\in} \mathbb{R}^{u \times c}$ are the composite parity feature set and label set as follows: $
\widebreve{\bX} {=} \sum_{j=1}^n \widebreve{\bX}^{(j)},\,\,  \widebreve{\bY} {=} \sum_{j=1}^n \widebreve{\bY}^{(j)}$. 
Therefore, we have:
\begin{align}
\label{eq:parityX}
\widebreve{\bX}= \bG \bW \what{\bX},\,
\widebreve{\bY}= \bG \bW {\bY},
\end{align}
where $\bG {=} [{\bf G}_1, \dots, {\bf G}_n]{\in} \mathbb{R}^{u\times m}$ and ${\bf W}{\in}\mathbb{R}^{m\times m}$ is a block-diagonal matrix given by 
$\bW=\text{diag}([\bw_1,\ldots,\bw_n])$. Equation (\ref{eq:parityX}) represents the encoding over \textit{the entire decentralized dataset} $\what{D}{=}(\what{\bX}, \bY)$, performed implicitly in a distributed manner across clients. 
\begin{remark}
	Although client $j{\in}[n]$ shares its locally coded dataset ${\widebreve{D}}_j{=}(\widebreve{\bX}^{(j)},\widebreve{\bY}^{(j)})$ with the central server, the local dataset $\what{D}_j$ as well as the encoding matrix $\bG_j$ are private to the client and not shared with the server. It is an interesting future work to characterize the exact privacy leakage after the proposed randomization.  
\end{remark}
\subsection{Coding Redundancy and Load Assignment} 
\label{sec:red}
The server carries out a load policy design based on statistical conditions of MEC for finding $\wtilde{\ell}_j{\leq} \ell_j$, the number of data points to be processed at the $j$-th client, and $u{\leq} u^{max}$, the number of coded data points to be processed at the server,  where, $u^{max}$ is the maximum number of coded data points that the server can process. For each epoch, let $R_j (t;\wtilde{\ell}_j)$ be the indicator random variable denoting the event that server receives the partial gradient computed and communicated by the $j$-th client within time $t$, i.e. $R_j (t;\wtilde{\ell}_j){=}\wtilde{\ell}_j {\mathbbm{1}}_{\{T_j\leq t\}}$.  Clearly, $R_j (t;\wtilde{\ell}_j){\in}\{0,\wtilde{\ell}_j\}$. The following denotes the total aggregate return for $t\geq0$:
\begin{align}
\label{eq:aggReturn}
R(t;(u,\wtilde{\mathbf{\bl}}))&=R_C(t;u)+R_U(t;\bl),
\end{align}
where $R_{C}(\cdot)$ denotes indicator random variable for the event that the server finishes computing the \textit{coded gradient} over the parity dataset $\widebreve{D}{=}(\widebreve{\bX},\widebreve{\bY})$, while $R_U(t;\bl){=}\sum_{j=1}^{n} R_j(t;\ell_j)$ denotes the total aggregate return for the \textit{uncoded} partial gradients from the clients. The goal is to have an expected return $\Expc(R(t;\wtilde{\bl})){=}m$ for a minimum waiting time $t{=}t^*$, where $m$ is the total number data points at the clients. When coding redundancy is large, clients need to compute less. This however results in a coarser approximation of the true gradient over the entire distributed client data, since encoding of training data results in colored noise, which may result in poor convergence. 

Without loss of generality, we assume that the server has reliable and powerful computational capability so that $R_{C}(t;u^{max}){=}u^{max}$ a.s. for any $t\geq 0$. Therefore, the problem reduces to finding an expected aggregate return from the clients $\Expc(R_U(t;\bl))=(m-u^{max})$ for a \textit{minimum waiting time} $t=t^*$ for each epoch. The two-step approach for the load allocation is described below:    \\
\textbf{Step 1}: Optimize $\wtilde{\bl}{=}(\wtilde{\ell}_1,\ldots,\wtilde{\ell}_n)$ to maximize the expected return $\Expc(R_U(t;\bl))$ for a fixed $t{>}0$ by solving the following for the expected aggregate return from the clients:
\begin{align}
\label{eq:optimLn}
\bl^* (t) = \argmax_{\boldsymbol{0} \leq \wtilde{\bl}\leq (\ell_1,\ldots,\ell_n)} \Expc(R_U (t;\wtilde{\bl}))
\end{align}
Also, the optimization problem in (\ref{eq:optimLn}) is decomposable into $n$ independent optimization problems, one for each client $j$: 
\begin{align}
\label{eq:optimLj}
\ell_j^* (t) = \argmax_{0 \leq \wtilde{\ell}\leq \ell_j} \Expc(R_j (t;\wtilde{\ell_j}))
\end{align}
\begin{remark}
\label{rmk:concavityExpRetClnt}
We prove in Section \ref{sec:theory} that $\Expc(R_j (t;\wtilde{\ell}_j))$ is a piece-wise concave function in $\ell_j{>}0$, where the interval boundaries are determined by the number of transmissions during the return time for $j$-th client. Thus, (\ref{eq:optimLj}) can be efficiently solved using any convex toolbox.
\end{remark}
\textbf{Step 2}: Next, optimization of $t$ is considered in order to find the minimum waiting time so that the maximized expected aggregate return $\Expc(R(t;{\bl}^{*}(t)))$ is equal to $m$, where ${\bl}^{*}(t)$ is the solution to (\ref{eq:optimLn}). Specifically, for a tolerance parameter $\epsilon\geq0$, the following optimization problem is considered:
\begin{align}
\label{eq:epochTime}
t^* = \argmin_{t \geq 0}: m-u \leq \Expc(R_U(t;{\bl}^{*}(t)))\leq m-u +\epsilon.
\end{align} 
\begin{remark}
\label{rmk:monotoneExpRetClnt}
In Section \ref{sec:theory}, we leverage our derived mathematical result for $\Expc(R_j (t;\wtilde{\ell}_j))$ to numerically show that $\Expc(R(t;{\bl}^{*}(t)))$ is monotonically increasing in $t$. Therefore, (\ref{eq:epochTime}) can be efficiently solved using a binary search for $t$. 
\end{remark}
\begin{remark}
The optimization procedure outlined above can be generalized to solving (\ref{eq:aggReturn}), by treating server as $(n+1)$-th node, with $0{\leq}\wtilde{\ell}_{n+1}{\leq} u^{max}$. Then, $u=\ell^{*}_{n+1}(t^{*})$.
\end{remark}
\subsection{Weight Matrix Construction} 
\label{sec:weight}
Client $j$ samples $\ell_j^{*}(t^{*})$ data points uniformly and randomly that it will process for local gradient computation during training. It is not revealed to the server which data points are sampled, adding another layer of privacy. The probability that the partial gradient computed at client $j$ is not received at server by $t^{*}$ is $pnr_{j,1}{=}(1-\Prob\{T_j \leq t^{*}\})$. Also, $(\ell_j-\ell_j^{*}(t^{*}))$ data points are never evaluated locally, which  implies that the probability of no return $pnr_{j,2}{=}1$. 

The diagonal weight matrix $\bW_j{\in}\mathbb{R}^{\ell_j\times \ell_j}$ captures this absence of updates reaching the server during the training procedure for different data points. Specifically, for the $\ell_j^{*}(t^{*})$ data points processed at the client, the corresponding weight matrix coefficient is $w_{j,k} {=} \sqrt{pnr_{j,1}}$, while for the $(\ell_j-\ell_j^{*}(t^{*}))$ data points never processed, $w_{j,k} {=} \sqrt{pnr_{j,2}}$. As we illustrate next, this weighing ensures that the combination of the coded gradient and the partial gradient updates from the non-straggling clients stochastically approximates the full gradient over the entire dataset across the clients.

\subsection{Coded Federated Aggregation}
\label{sec:codfedagg}
In each epoch, the server computes the coded gradient ${\bg}_C$ over the composite parity data $\widebreve{D}{=}(\widebreve{\bX},\widebreve{\bY})$ as follows:
\begin{align}
{\bg}_C&=\widebreve{\bX}^{T}(\widebreve{\bX} \bbeta^{(r)} - \widebreve{\bY})\nonumber\\
&=\what{\bf X}^T {\bf W}^T\biggl({\bf G}^T{\bf G}\biggr){\bf W}(\what{\bf X}{\boldsymbol\beta}^{(r)}-\what{\bf Y})
\end{align}
\begin{align}
\label{eq:codedGrad}
\implies\Expc({\bg}_C)&\overset{(a)}= \what{\bf X}^T {\bf W}^T{\bf W}(\what{\bf X}{\boldsymbol\beta}^{(r)}-\what{\bf Y})\nonumber\\
&= \sum_{j=1}^{n}\sum_{k=1}^{\ell_j} w_{j,k}^2 \what{\bf x}^{(j)T}_k(\what{\bf x}^{(j)}_k {\boldsymbol\beta}^{(t)} - \what{\by}^{(j)}_k)
\end{align}
In $(a)$, we have replaced the quantity $\Expc(\bG^T \bG)$ by an identity matrix since the entries in $\bG\in\mathbb{R}^{u\times m}$ are IID with variance ${1}/{u}$. One can even replace $\bG^T \bG$ by an identity matrix as a good approximation for reasonably large $u$.

Client $j$ computes gradient ${\bg}_U^{(j)}{=}\frac{1}{\ell_j^{*}(t^{*})}\wtilde{\bX}^T(\wtilde{\bX}\bbeta^{(r)}-\wtilde{\bY})$ over $\wtilde{D}^{(j)}{=}(\wtilde{\bX}^{(j)},\wtilde{\bY}^{(j)})$, which is composed of the $\ell_j^{*}(t^{*})$ data points that $j$-th client samples for processing before training. The server waits for the partial gradients from the clients until the optimized waiting time $t^*$ and aggregates them to obtain ${\bg}_U{=}\sum_{j=1}^n 
\ell_j^{*}(t^{*}){\mathbbm{1}}_{\{T_j\leq t^{*}\}} {\bg}_U^{(j)}$. The server combines the coded and uncoded gradients to  obtain $\bg_M{=}\frac{1}{m}(\bg_C+\bg_U)$, that stochastically approximates the full gradient $\what{\bg}{=}\frac{1}{m}\what{\bX}^T(\what{\bX}\bbeta^{(r)}-\what{\bY})$.
Specifically, the expected aggregate gradient from the clients is as follows:
\begin{align}
\label{eq:receivedGrad}
&\Expc({\bg}_U)=\sum_{j=1}^n\Prob(T_j\leq t^{*}) {\bg}_U^{(j)}\nonumber\\
&\overset{(a)}=\sum_{j=1}^{n}\sum_{k=1}^{\ell^{*}_j(t)}\Prob(T_j \leq t^*)\what{\bf x}^{(j)T}_k(\what{\bf x}^{(j)}_k {\boldsymbol\beta}^{(t)} - \what{\by}^{(t)}_k) \nonumber\\
&\overset{(b)}=\sum_{j=1}^{n}\sum_{k=1}^{l_j}(1-w_{j,k}^2)\what{\bf x}^{(j)T}_k(\what{\bf x}^{(j)}_k {\boldsymbol\beta}^{(t)} - \what{\by}^{(j)}_k),
\end{align}
where the inner sum in $(a)$ denotes the sum over the data points in $\wtilde{D}^{(j)}=(\wtilde{\bX}^{(j)},\wtilde{\bY}^{(j)})$, while in $(b)$ all the points in the local dataset are included, with $(1-w_{j,k}^2)=0$ for the points that are selected to be not processed by the $j$-th client. In light of (\ref{eq:codedGrad}) and (\ref{eq:receivedGrad}), we can see that $\Expc(\bg_M)=\what{\bg}$.

\section{Analyzing \name}\label{sec:theory}
In this section, we analyze the expected return $\Expc(R_j (t;\wtilde{\ell}_j))$ defined in Section \ref{sec:red}. We first present the main result.
\begin{theorem*} 
\label{thm:expcRetWorker}
For the compute and communication models described in Section \ref{subsec:delayModel}, let $0{\leq}\wtilde{\ell}_j{\leq}\ell_j$ be the number of data points processed by $j$-th client in each training epoch. Then, for a waiting time of $t$ at the server, the expectation of the return $R_j(t;\wtilde{\ell}_j){=}\wtilde{\ell}_j {\mathbbm{1}}_{\{T_j\leq t\}}$ satisfies the following: 
\begin{align}
    &\Expc({R_{j}(t;\wtilde{\ell}_j)})\nonumber\\
    &=\left\{
	\begin{array}{ll}
		\sum_{\nu=2}^{\nu_m}{U}\bigg(t-\frac{\wtilde{\ell}_j}{\mu_j} -\tau_j \nu\bigg)h_{\nu} f_{\nu}(t;\wtilde{\ell}_j)  & \mbox{if } \nu_m\geq 2   \\
		0 & \mbox{otherwise } \nonumber
	\end{array}
    \right.
\end{align}
where $U(\cdot)$ is the unit step function,\\$f_{\nu}(t;\wtilde{\ell}_j)=\wtilde{\ell}_j\bigg(1-e^{-\frac{\alpha_j\mu_j}{\wtilde{\ell}_j}(t-\frac{\wtilde{\ell}_j}{\mu_j}-\tau_j \nu)}\bigg)$,\\
$h_{\nu}= (\nu-1)(1-p_j)^2p_j^{\nu-2}$, \\
and $\nu_m\in\mathbb{Z}$ satisfies $t-\tau_j \nu_m>0,\, t-\tau_j (\nu_m+1)\leq 0$.
%\vspace{-1cm}
\end{theorem*}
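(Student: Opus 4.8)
The plan is to reduce the claim to evaluating the CDF of the total delay $T_j$ at the threshold $t$. Since the return is $R_j(t;\wtilde{\ell}_j) = \wtilde{\ell}_j \mathbbm{1}_{\{T_j\leq t\}}$ and $\wtilde{\ell}_j$ is deterministic, linearity of expectation immediately gives $\Expc(R_j(t;\wtilde{\ell}_j)) = \wtilde{\ell}_j\,\Prob(T_j\leq t)$. First I would write $T_j$ as a sum of three independent pieces following the models of Section~\ref{subsec:delayModel}: the deterministic compute term $\wtilde{\ell}_j/\mu_j$, the stochastic compute term $X := T_{cmp}^{(j,2)} \sim \mathrm{Exp}(\gamma_j)$ with $\gamma_j = \alpha_j\mu_j/\wtilde{\ell}_j$, and the communication term $(N_d + N_u)\tau_j$, where $N_d$ and $N_u$ are the IID geometric transmission counts for download and upload. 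Independence of the compute randomness from the wireless randomness is what lets me condition cleanly in the next step.

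The crux is the communication term. Setting $N := N_d + N_u$, I would compute its law as the convolution of two geometric distributions supported on $\{1,2,\dots\}$: for $\nu\geq 2$, $\Prob(N=\nu) = \sum_{k=1}^{\nu-1} p_j^{k-1}(1-p_j)\,p_j^{\nu-k-1}(1-p_j) = (\nu-1)(1-p_j)^2 p_j^{\nu-2}$, which is exactly the coefficient $h_\nu$ in the statement; the factor $(\nu-1)$ counts the ways to split $\nu$ transmissions across the two links. This is the one nonroutine step, and it is where a careful reading of the model matters: the ``IID'' qualifier on the up/downlink delays must be read as two \emph{independent} geometric draws rather than a single shared one, since otherwise the triangular weight $h_\nu$ would not arise.

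Next I would condition on $\{N=\nu\}$. Given this event the communication delay is the deterministic quantity $\nu\tau_j$, so $T_j = \wtilde{\ell}_j/\mu_j + \nu\tau_j + X$ and only the exponential part remains random. Hence $\Prob(T_j\leq t \mid N=\nu) = \Prob\bigl(X \leq t - \wtilde{\ell}_j/\mu_j - \nu\tau_j\bigr)$, which equals $1 - e^{-\gamma_j(t - \wtilde{\ell}_j/\mu_j - \nu\tau_j)}$ when the argument is positive and $0$ otherwise. Encoding the positivity constraint through the unit step $U(t - \wtilde{\ell}_j/\mu_j - \nu\tau_j)$ and substituting $\gamma_j = \alpha_j\mu_j/\wtilde{\ell}_j$, the conditional probability multiplied by $\wtilde{\ell}_j$ becomes precisely $U(\cdot)\,f_\nu(t;\wtilde{\ell}_j)$.

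Finally I would assemble everything by the law of total probability, $\Expc(R_j) = \wtilde{\ell}_j \sum_{\nu\geq 2} h_\nu\,\Prob(T_j\leq t\mid N=\nu)$, yielding the claimed sum $\sum_{\nu} U(t - \wtilde{\ell}_j/\mu_j - \nu\tau_j)\, h_\nu f_\nu(t;\wtilde{\ell}_j)$. To obtain the finite upper limit I note that the unit step forces every nonzero term to satisfy $\nu\tau_j < t - \wtilde{\ell}_j/\mu_j < t$, so all surviving indices lie in $\{2,\dots,\nu_m\}$ with $\nu_m$ the largest integer strictly below $t/\tau_j$; truncating at $\nu_m$ thus discards only vanishing terms while keeping the expression finite. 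The edge case $\nu_m < 2$ corresponds to $t < 2\tau_j$, where even the minimum of two transmissions cannot complete in time, so every term drops and the expectation is $0$, matching the ``otherwise'' branch.
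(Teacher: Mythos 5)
Your proposal is correct and follows essentially the same route as the paper's proof: both reduce the claim to computing $\Prob(T_j \leq t)$, model the two-way communication count as the sum of two IID geometric variables (the paper cites the negative binomial PMF where you derive it by explicit convolution, giving the same $h_\nu$), use independence of the exponential compute term to evaluate the conditional CDF with the unit step, and truncate the series at $\nu_m$ since later terms vanish. The treatment of the edge case $\nu_m < 2$ also matches the paper's observation that $\Prob(T_j \leq t) = 0$ for $t \leq 2\tau_j$.
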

The proof is in Appendix \ref{sec:append1}. Next we discuss the behavior of $\Expc(R_j (t;\wtilde{\ell}_j))$ for $\nu_m{\geq}2$. For a fixed $t{>}0$, consider the function $f_{\nu}(t;\wtilde{\ell}_j)$ for $\wtilde{\ell}_j{>}0$. Then, the following holds: 
\begin{align}
f''_{\nu}(t;\wtilde{\ell}_j) = -e^{-\frac{\alpha_j\mu_j}{\wtilde{\ell}_j}(t-\nu\tau_j-\frac{\wtilde{\ell}_j}{\mu_j})}\frac{{\alpha_j^2\mu_j}^2 (t-\nu\tau_j)^2}{\wtilde{\ell}^3_j}<0.\nonumber
\end{align}
Thus, $f_{\nu}(t;\wtilde{\ell}_j)$ is strictly concave in the domain $\wtilde{\ell}_j{>}0$. Also, $f_{\nu,t}(\wtilde{\ell}_j){\leq}0$ for $\wtilde{\ell}_j{\geq}{\mu_j(t-\tau_j \nu)}$ for $\nu\in\{2,\ldots,\nu_m\}$. Solving for $f'_{\nu}(t;\wtilde{\ell}_j)=0$, we obtain the optimal load as follows:  
\begin{align}
\label{eq:optLfn}
    \ell_{j}^*(t,\nu) = -\frac{\alpha_j \mu_j}{W_{-1}(-e^{-(1+\alpha_j)})+1}(t-\nu\tau_j)U(t-\nu\tau_j).
\end{align}
where $W_{-1}(\cdot)$ is the minor branch of Lambert $W$-function, where the Lambert $W$-function is the inverse function of $f(W)=We^W$.
\begin{figure}[h!]
  \centering
  \subfigure[Illustration of the piece-wise concavity of the function $\Expc(R_j(t;\wtilde{\ell}_j))$ for $\wtilde{\ell}_j>0$.]{\includegraphics[scale=0.3]{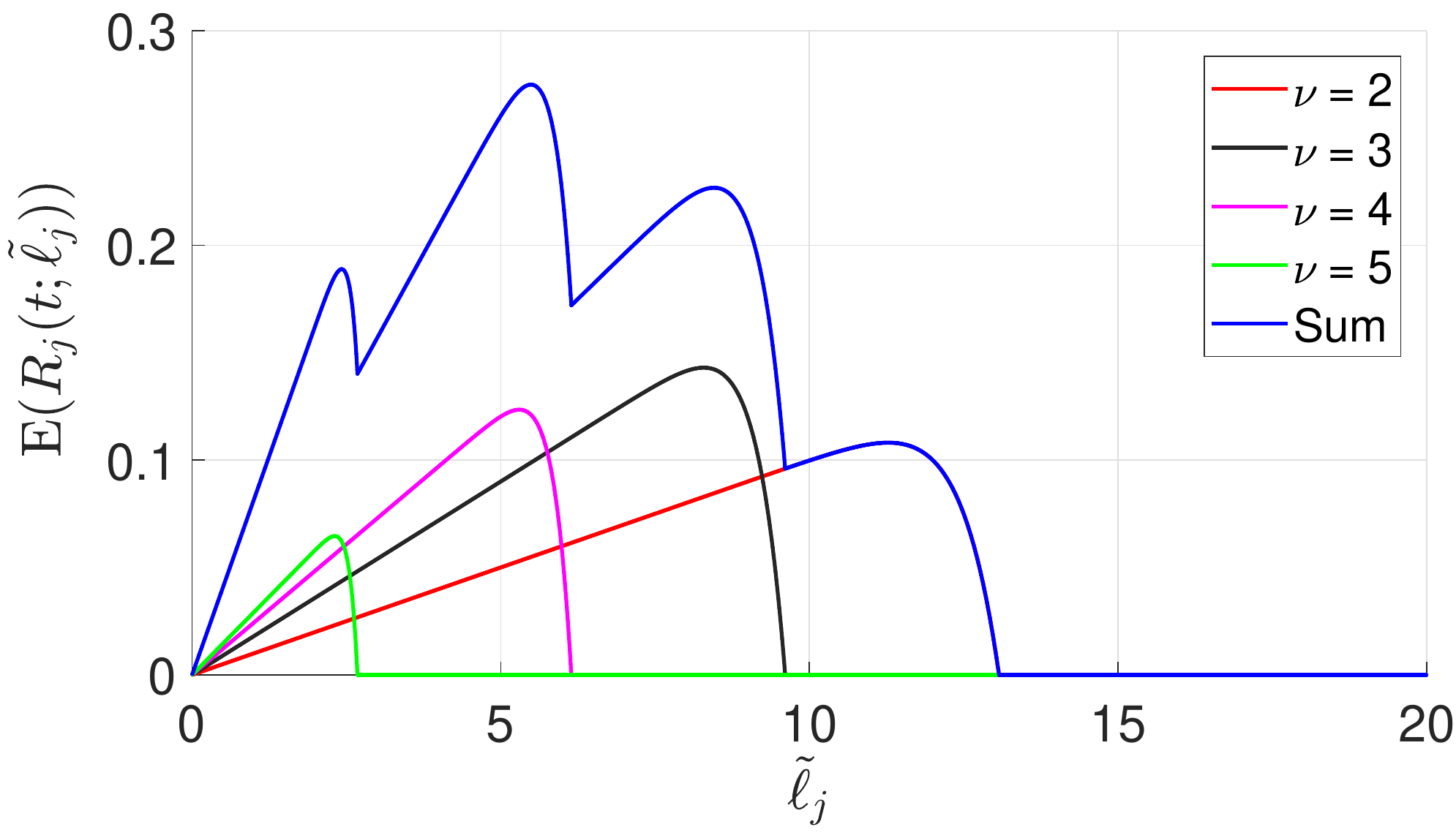}\label{fig:piecewiseConc}}\quad
  \subfigure[Illustrating the monotonic relationship between $\Expc(R_j (t;{\ell}^{*}_j(t)))$ and $t$ for $j$-th client.]{\includegraphics[scale=0.3]{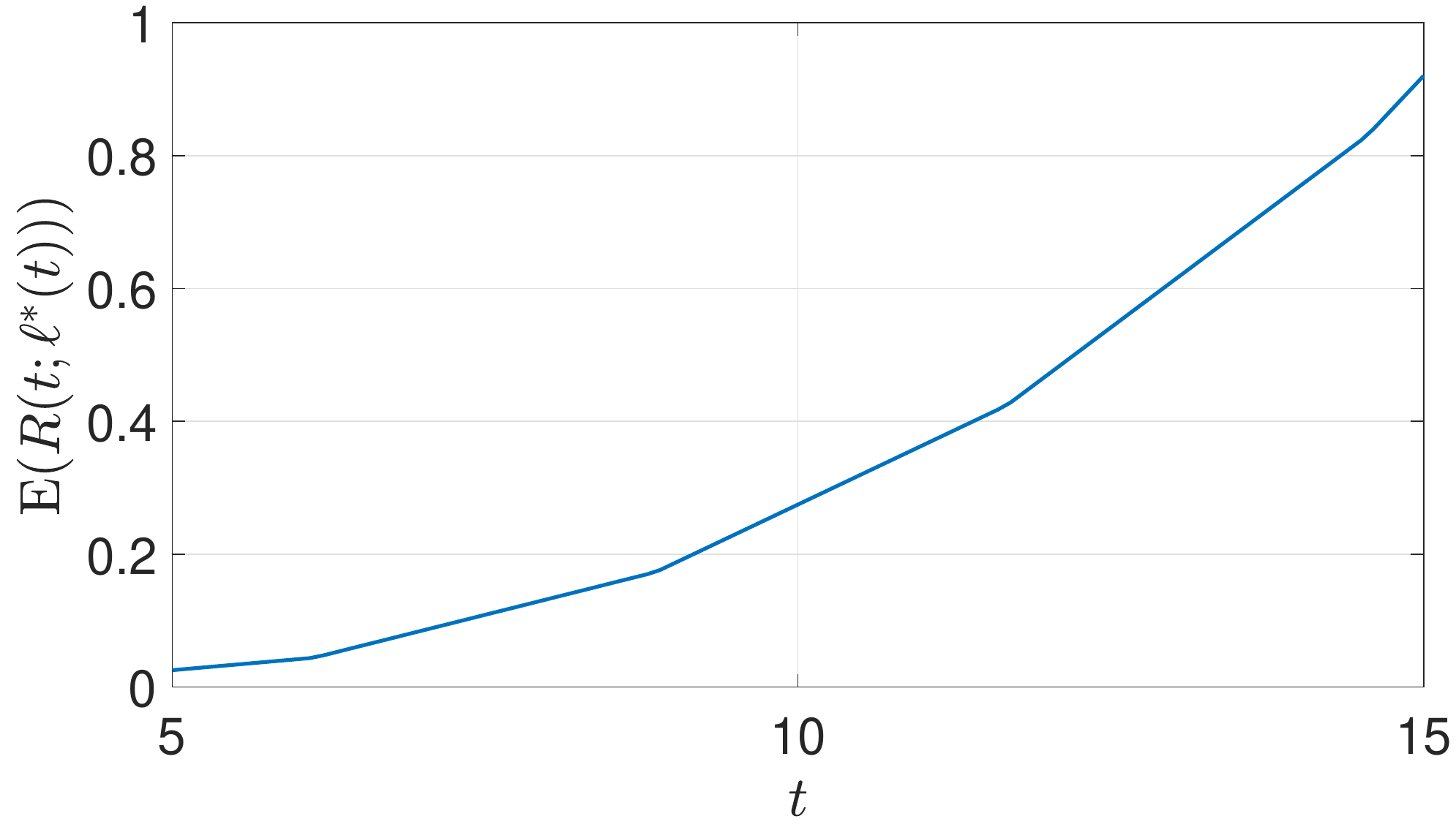}\label{fig:optretVSt}}\caption{Illustrating the properties of expected aggregate return $\Expc(R_j(t;\wtilde{\ell}_j))$ based on the result of the Theorem. We assume $p_j{=}0.9$, $\tau_j{=}\sqrt{3}$, $\mu_j{=}2$ and in Fig. \ref{fig:piecewiseConc}, we have fixed $t{=}10$.}\label{fig:illustrate_analysis}
 \end{figure}
 Therefore, as highlighted in Remark \ref{rmk:concavityExpRetClnt}, the expected return $\Expc(R_j (t;\wtilde{\ell}_j))$ is piece-wise concave in $\wtilde{\ell}_j$ in the intervals $(0,\mu_j(t-\nu_m\tau_j)),\ldots,(\mu_j(t-3\tau),\mu_j(t-2\tau))$. Thus, for a given $t{>}0$, the problem of maximizing the expected return decomposes into a finite number of convex optimization problems, that are efficiently solved in practice~\cite{boyd2004convex}. This piece-wise concave relationship is also highlighted in Fig. \ref{fig:piecewiseConc}.
 
Consider the optimized expected return $\Expc(R_j (t;{\ell}^{*}_j(t)))$. Intuitively, as we increase the  waiting time at the server, the optimized load allocation should vary such that the server gets more return on average. We substantiate this intuition and illustrate the relationship in Fig. \ref{fig:optretVSt}. As $\Expc(R_j (t;\wtilde{\ell}^{*}_j(t)))$ is monotonically increasing for each client $j$, it holds for $\Expc(R_U (t;{\bl}^{*}(t))){=}\sum_{j=1}^{n} \Expc(R_j (t;{\ell}^{*}_j(t)))$. Hence, the optimization problem in (\ref{eq:epochTime}) can be solved efficiently using binary search, as claimed in Remark \ref{rmk:monotoneExpRetClnt}. 

\section{Empirical Evaluation of \name}\label{sec:experiments}
We now illustrate the numerical gains of \name proposed in Section \ref{sec:codedfedl}, comparing it with the uncoded approach, where each client computes partial gradient over the entire local dataset, and the server waits to aggregate local gradients from all the clients. As a pre-processing step, kernel embedding is carried out for obtaining random features for each dataset, as outlined in Section \ref{sec:kernelembed}.
 \begin{figure}[h!]
  \centering
  \subfigure[The test accuracy with respect to the wall-clock time under uncoded scheme vs. CodedFedL scheme.]{\includegraphics[scale=0.43]{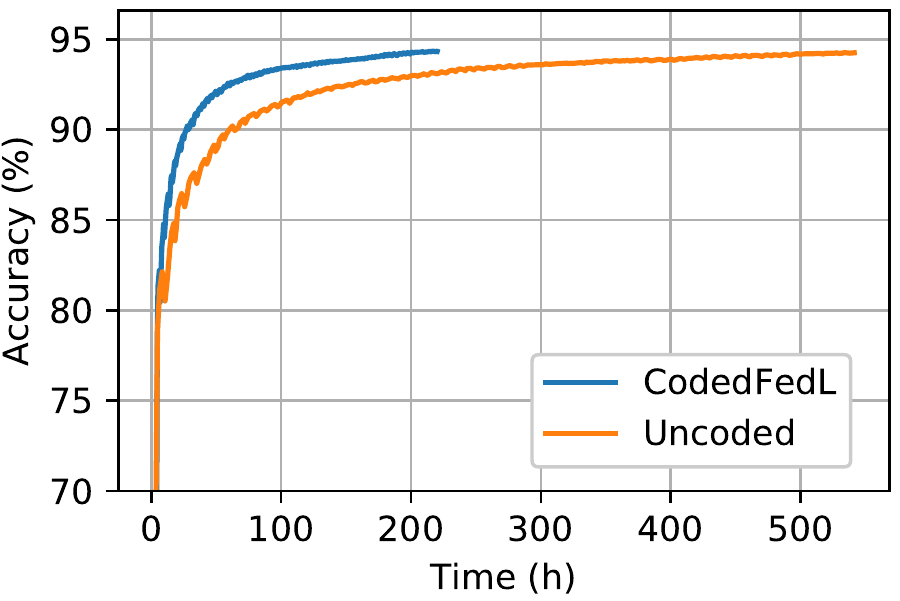}\label{fig:mnist}}\quad
  \subfigure[The test accuracy with respect to mini-batch update iteration under uncoded scheme vs. CodedFedL scheme.]{\includegraphics[scale=0.43]{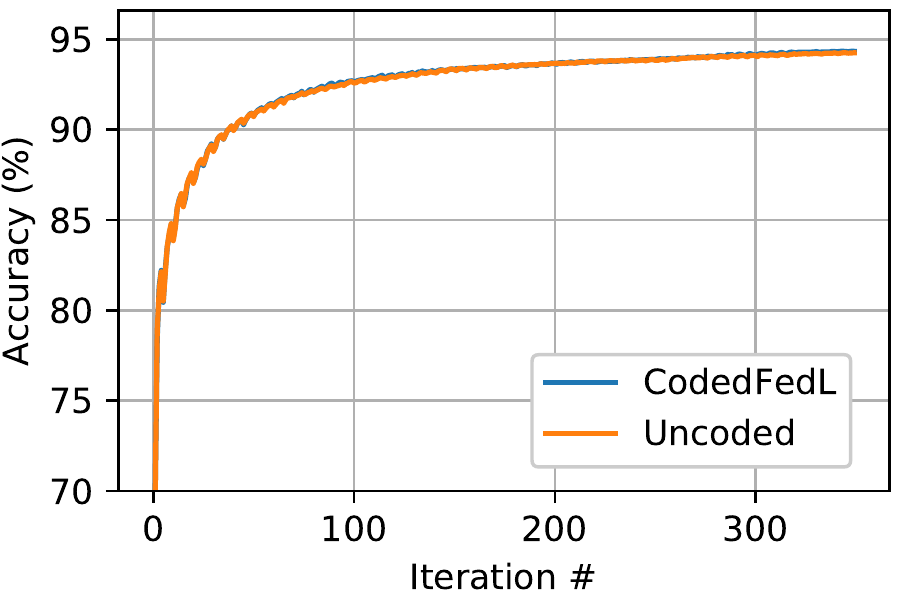}\label{fig:mnistiter}}\caption{Illustrating the results for MNIST.}\label{fig:results_mnist}
 \end{figure}
 \iffalse
 \begin{table}[htb!]
\caption{Summary of Results}\label{tab:summary}
\centering
\begin{tabular}{l |c c c c}%{|l|c|c|c|c|}
%\hline
Dataset & $\gamma$ (\%) & $t_\gamma^U$ (h) & $t_\gamma^{C}$ (h) & Gain \\ \hline
MNIST & 94.2 & 505 & 187 & $\times 2.70$\\ \hline 
\end{tabular}
\end{table}
\fi
\subsection{Simulation Setting}
We consider a wireless scenario consisting of $N{=}30$ heterogeneous client nodes and an MEC server. We consider both MNIST~\cite{lecun2010mnist} and Fashion-MNIST~\cite{xiao2017fashion}. Training is done in batches, and encoding for \name is applied based on the global mini-batch, with a coded redundancy of only $10\%$. The details of simulation parameters such as those for compute and communication resources, pre-processing steps, and strategy for modeling heterogeneity of data have been provided in Appendix \ref{sec:append2}. For each dataset, training is performed on the training set, while accuracy is reported on the test set. 
\subsection{Results}
Figure~\ref{fig:mnist} illustrates the generalization accuracy as a function of wall-clock time for MNIST, while Figure~\ref{fig:mnistiter} illustrates generalization accuracy vs training iteration. Similar results for Fashion-MNIST are included in Appendix \ref{sec:append2}. Clearly, CodedFedL has significantly better convergence time than the uncoded approach, and as highlighted in Section \ref{sec:codfedagg}, the coded federated gradient aggregation approximates the uncoded gradient aggregation well for large datasets. To illustrate this further, let $\gamma$ be target accuracy for a given dataset, while $t_\gamma^U$ and $t_\gamma^C$ respectively be the first time instants to reach the $\gamma$ accuracy for uncoded and CodedFedL respectively. In Table 1 in Appendix \ref{sec:append2}, we summarize the results, demonstrating a gain of up to $2.7\times$ in convergence time for CodedFedL over the uncoded scheme, even for a small coding redundancy of $10\%$.

\iffalse
\section{Conclusion}
\label{sec:conclusion}
\input{6-conclusion.tex}
\fi
% In the unusual situation where you want a paper to appear in the
% references without citing it in the main text, use \nocite
%\nocite{langley00}
%\pagebreak
%\newpage
\bibliography{biblio}
\bibliographystyle{icml2020}

\appendix
\section{Supplementary Material}

This is the supplementary component of our submission.
\subsection{Proof of Theorem}
\label{sec:append1}
\begin{theorem*} 
%\label{thm:expcRetWorker}
For the compute and communication models described in Section \ref{subsec:delayModel}, let $0{\leq}\wtilde{\ell}_j{\leq}\ell_j$ be the number of data points processed by $j$-th client in each training epoch. Then, for a waiting time of $t$ at the server, the expectation of the return $R_j(t;\wtilde{\ell}_j){=}\wtilde{\ell}_j {\mathbbm{1}}_{\{T_j\leq t\}}$ satisfies the following: 
\begin{align}
    &\Expc({R_{j}(t;\wtilde{\ell}_j)})\nonumber\\
    &=\left\{
	\begin{array}{ll}
		\sum_{\nu=2}^{\nu_m}{U}\bigg(t-\frac{\wtilde{\ell}_j}{\mu_j} -\tau_j \nu\bigg)h_{\nu} f_{\nu}(t;\wtilde{\ell}_j)  & \mbox{if } \nu_m\geq 2   \\
		0 & \mbox{otherwise } \nonumber
	\end{array}
    \right.
\end{align}
where $U(\cdot)$ is the unit step function,\\$f_{\nu}(t;\wtilde{\ell}_j)=\wtilde{\ell}_j\bigg(1-e^{-\frac{\alpha_j\mu_j}{\wtilde{\ell}_j}(t-\frac{\wtilde{\ell}_j}{\mu_j}-\tau_j \nu)}\bigg)$,\\
$h_{\nu}= (\nu-1)(1-p_j)^2p_j^{\nu-2}$, \\
and $\nu_m\in\mathbb{Z}$ satisfies $t-\tau_j \nu_m>0,\, t-\tau_j (\nu_m+1)\leq 0$.
%\vspace{-1cm}
\end{theorem*}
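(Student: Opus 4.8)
The plan is to condition on the total number of packet transmissions and apply the law of total expectation. Since $R_j(t;\wtilde{\ell}_j){=}\wtilde{\ell}_j\mathbbm{1}_{\{T_j\leq t\}}$ and $\wtilde{\ell}_j$ is deterministic, it suffices to compute $\Expc(R_j(t;\wtilde{\ell}_j)){=}\wtilde{\ell}_j\,\Prob\{T_j\leq t\}$. First I would write the total delay of the $j$-th client, when it processes $\wtilde{\ell}_j$ points, as the sum of three independent pieces: the deterministic compute time $\wtilde{\ell}_j/\mu_j$, the stochastic compute time which is exponential with rate $\alpha_j\mu_j/\wtilde{\ell}_j$, and the communication time $(N_d{+}N_u)\tau_j$, where $N_d$ and $N_u$ are the independent geometric transmission counts for the downlink and uplink given by (\ref{eq:commModel}).

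The first key step is to identify the distribution of the total transmission count $N{:=}N_d{+}N_u$. As the convolution of two i.i.d.\ geometric variables on $\{1,2,\dots\}$ with success probability $1{-}p_j$, $N$ follows a negative binomial law supported on $\{2,3,\dots\}$, which I would obtain by a direct convolution:
\begin{align}
\Prob\{N=\nu\}=\sum_{k=1}^{\nu-1}(1-p_j)p_j^{k-1}(1-p_j)p_j^{\nu-k-1}=(\nu-1)(1-p_j)^2p_j^{\nu-2},\nonumber
\end{align}
which is exactly $h_\nu$.

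The second key step is to condition on $\{N{=}\nu\}$. Given $\nu$ transmissions, the event $\{T_j\leq t\}$ is equivalent to the stochastic compute component being at most $a_\nu{:=}t-\wtilde{\ell}_j/\mu_j-\nu\tau_j$. Evaluating the exponential CDF at $a_\nu$ gives the conditional probability $U(a_\nu)\bigl(1-e^{-\frac{\alpha_j\mu_j}{\wtilde{\ell}_j}a_\nu}\bigr)$, where the unit step $U(a_\nu)$ enforces that a negative time budget yields probability zero. Multiplying by $\wtilde{\ell}_j$ reproduces precisely $U(a_\nu)f_\nu(t;\wtilde{\ell}_j)$, and summing over $\nu\geq2$ weighted by $h_\nu$ via the law of total expectation then yields the stated formula.

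The only slightly delicate point is the truncation of the sum at $\nu_m$. For any $\nu$ with $\nu\tau_j\geq t$ we have $a_\nu<0$ regardless of $\wtilde{\ell}_j\geq0$, so $U(a_\nu)=0$ and the corresponding terms vanish; hence the infinite sum collapses to $\nu\in\{2,\dots,\nu_m\}$ with $\nu_m=\lfloor t/\tau_j\rfloor$, matching the definition $t-\tau_j\nu_m>0\geq t-\tau_j(\nu_m+1)$. When $\nu_m<2$ no admissible $\nu$ remains and the expectation is $0$, giving the ``otherwise'' branch. I expect the main obstacle to be bookkeeping rather than conceptual: keeping the inner step function $U(t-\wtilde{\ell}_j/\mu_j-\tau_j\nu)$, which depends on $\wtilde{\ell}_j$, consistent with the outer summation cap $\nu_m$, which does not, and verifying that the independence assumptions justify both the convolution and the conditioning.
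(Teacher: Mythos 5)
Your proposal is correct and follows essentially the same route as the paper's proof: reduce to $\wtilde{\ell}_j\,\Prob\{T_j\leq t\}$, identify the total transmission count as a negative binomial variable on $\{2,3,\dots\}$ (the paper cites this as a known fact where you derive it by explicit convolution), condition on that count, apply the exponential CDF with the unit step handling negative time budgets, and truncate the sum at $\nu_m$. The only cosmetic difference is your explicit convolution computation of $h_\nu$, which the paper omits.
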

\begin{proof}
	Using the computation and communication models presented in Section \ref{subsec:delayModel}, we have the following for the execution time for one epoch for $j$-th client:
	\begin{align}
	\label{eq:theo1}
	T^{(j)}&=T^{(j,1)}_{cmp}+T^{(j,2)}_{cmp}+T^{(j)}_{com-d}+T^{(j)}_{com-u}\nonumber\\
	&=\frac{\wtilde{\ell}_j}{\mu_j}+T^{(j,2)}_{cmp}+\tau_j N^{(j)}_{com},
	\end{align}
	where $N^{(j)}_{com}{\sim} \text{NB}(r=2,p=1-p_j)$ has negative binomial distribution while $T^{(j,2)}_{cmp}{\sim} \text{E}(\frac{\alpha_j\mu_j}{\wtilde{\ell}_j})$ exponential. We have used the fact that $T^{(j)}_{com-d}$ and $T^{(j)}_{com-u}$ are IID $\text{G}(p)$ geometric random variables and sum of $r$ IID $\text{G}(p)$ is $\text{NB}(r,p)$. Therefore, the probability distribution for $T^{(j)}$ is obtained as follows:
	\begin{align}
	%\label{eq:theo2}
	&\Prob(T^{(j)}\leq t)=\Prob\left(\frac{\wtilde{\ell}_j}{\mu_j}+T_{cmp}^{(j,2)}+\tau_j N_{com}^{(j)}\leq t\right)\nonumber\\
	&=\sum_{\nu=2}^{\infty}\Prob(N_{com}^{(j)}=\nu)\nonumber\\ &\,\,\,\,\,\,\,\,\,\,\,\,\,\,\,\,\,\,\,\,\,\,\,\,\,\,\,\cdot\Prob\left(T_{cmp}^{(j,2)}\leq t-\frac{\wtilde{\ell}_j}{\mu_j}-\tau_j \nu|N_{com}^{(j)}=\nu\right)\nonumber\\
	&\overset{(a)}=\sum_{\nu=2}^{\infty}\Prob(N_{com}^{(j)}=\nu)\,\Prob\left(T_{cmp}^{(j,2)}\leq t-\frac{\wtilde{\ell}_j}{\mu_j}-\tau_j \nu\right)\nonumber\\
	&\overset{(b)}=\sum_{\nu=2}^{\infty}{U}\left(t-\frac{\wtilde{\ell}_j}{\mu_j} -\tau_j \nu\right)(\nu-1)(1-p_j)^2 p_j^{\nu-2}\nonumber\\
	&\,\,\,\,\,\,\,\,\,\,\,\,\,\,\,\,\,\,\cdot\left(1-\exp\left({-\frac{\alpha_j\mu_j}{\wtilde{\ell}_j}\left(t-\frac{\wtilde{\ell}_j}{\mu_j}-\tau_j \nu\right)}\right)\right)\nonumber
	\end{align}
	where $(a)$ holds due to independence of $T_{cmp}^{(j,2)}$ and $N_{com}^{(j)}$, while in $(b)$, we have used $U(\cdot)$ to denote the unit step function with $U(x){=}1$ for $x{>}0$ and $U(x){=}0$ for $x{\leq} 0$. For a fixed $t$, $\Prob(T^{(j)}{\leq} t){=}0$ if $t{\leq} 2\tau_j$. For $t{>}2\tau_j$, let $\nu_m{\geq} 2$ satisfy the following criteria:
	\begin{align}
	\label{eq:theo3}
	(t-\tau_j \nu_m)>0,
	(t-\tau_j (\nu_m+1))\leq 0. 
	\end{align}
	Therefore, for $\nu>\nu_m$, the terms in $(b)$ are $0$. Finally, as $\Expc({R_{j}(t;\wtilde{\ell}_j)}){=}\wtilde{\ell}_j \Expc({\bf{1}}_{\{T_j\leq t\}}){=}\wtilde{\ell}_j\Prob(T^{(j)}{\leq} t)$, we arrive at the result of our Theorem.
	%\vspace{-1cm}
\end{proof}

\subsection{Experiments}
\label{sec:append2}
\textbf{Simulation Setting}: We consider a wireless scenario consisting of $N{=}30$ client nodes and an MEC server, with similar computation and communication models as used in \cite{dhakal2019coded}. Specifically, we use an LTE network, where each node is assigned 3 resource blocks. We use the same failure probability $p_j{=}0.1$ for all clients, capturing the typical practice in wireless to adapt transmission rate for a constant failure probability. To model heterogeneity, link capacities (normalized) are generated using $\{1, k_1, k_1^2, ...\}$ and a random permutation of them is assigned to the clients, the maximum communication rate being $216$ kbps. An overhead of 10\% is assumed and each scalar is represented by 32 bits. The processing powers (normalized) are generated using $\{1, k_2, k_2^2, ...\}$, the maximum MAC rate being $3.072 \times 10^6$ MAC/s. We fix $(k_1,k_2)=(0.95,0.8)$.

We consider two different datasets: MNIST~\cite{lecun2010mnist} and Fashion-MNIST~\cite{xiao2017fashion}. The features are vectorized, and the labels are one-hot encoded. For kernel embedding, the hyperparameters are $(\sigma,q)=(5,2000)$. To model non-IID data distribution, training data is sorted by class label, and divided into 30 equally sized shard, one for each worker.  Furthermore, training is done in batches, where each global batch is of size 12000, i.e. each epoch constitutes 5 global mini-batch steps. Similarly, encoding for \name is applied based on the global mini-batch, with a coded redundancy of $10\%$. For both approaches, an initial step size of $6$ is used with a step decay of $0.8$ at epochs $40$ and $65$. Regularization parameter is  $0.000009$. For each dataset, training is performed on the training set, while accuracy is reported on the test set. Features are normalized to $[0,1]$ before kernel embedding, and we utilize the RBFSampler($\cdot$) function in the scikit library of Python for RFFM. Model parameters are initialized to $0$.

\textbf{Speedup Results}
Let $\gamma$ be target accuracy for a given dataset, while $t_\gamma^U$ and $t_\gamma^C$ respectively be the first time instants to reach the $\gamma$ accuracy for uncoded and CodedFedL respectively. In Table 1, we summarize the results for the two datasets.
\begin{table}[h]
\caption{Summary of Results}\label{tab:summary}
\centering
\begin{tabular}{l |c c c c}%{|l|c|c|c|c|}
%\hline
Dataset & $\gamma$ (\%) & $t_\gamma^U$ (h) & $t_\gamma^{C}$ (h) & Gain \\ \hline
MNIST & 94.2 & 505 & 187 & $\times 2.70$\\ %\hline
Fashion-MNIST & 84.2 & 513 & 216 & $\times 2.37$\\\hline 
\end{tabular}
\end{table}

\textbf{Convergence Curves for Fashion-MNIST}\\

As in the case for MNIST, CodedFedL provides significant improvement in convergence performance over the uncoded scheme for Fashion-MNIST as well as shown in Fig. \ref{fig:fmnist}. Fig. \ref{fig:fmnistiter} illustrates that coded federated aggregation described in Section 3.5 provides a good approximation of the true gradient over the entire distributed dataset across the client devices, even with a small coding redundancy of $10\%$.

\begin{figure}[h!]
	\centering
	\subfigure[The test accuracy with respect to the wall-clock time under uncoded scheme vs. CodedFedL scheme.]{\includegraphics[scale=0.43]{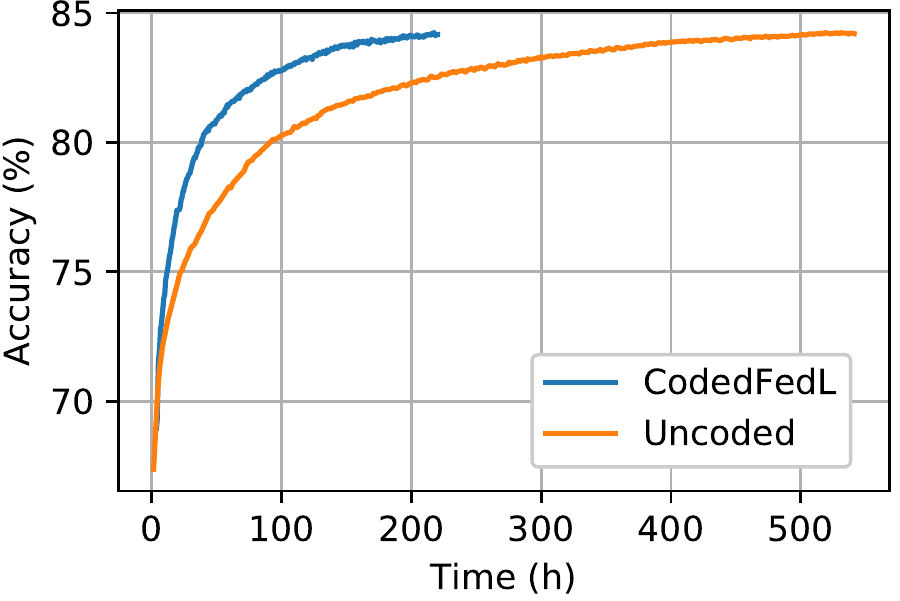}\label{fig:fmnist}}\quad
	\subfigure[The test accuracy with respect to mini-batch update iteration under uncoded scheme vs. CodedFedL scheme.]{\includegraphics[scale=0.43]{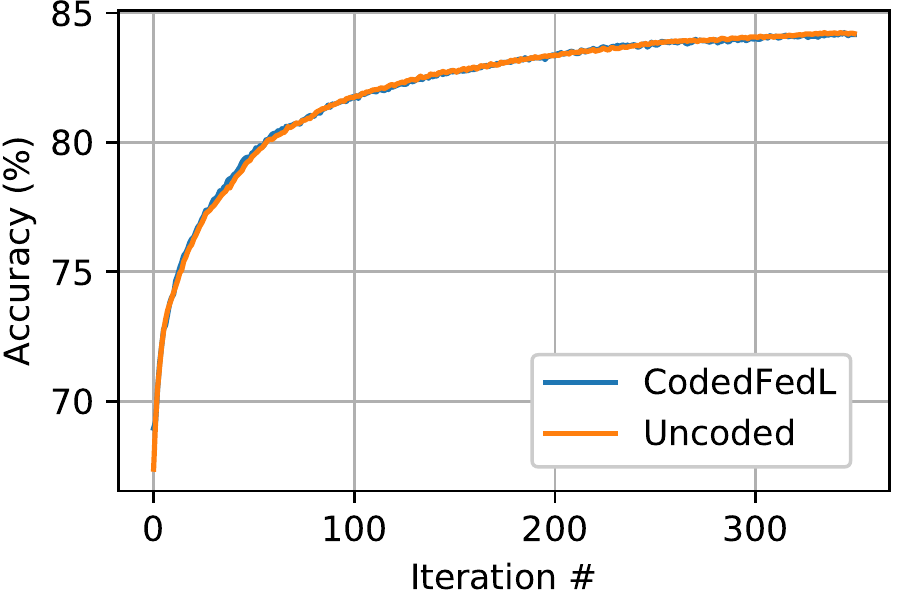}\label{fig:fmnistiter}}\caption{Illustrating the results for Fashion-MNIST.  }\label{fig:results_fmnist}
\end{figure}

\iffalse
\section{Supplementary Material}

\input{6-supplementary.tex}
\fi

\end{document}